\title{Monotone Bounded Depth Formula Complexity of Graph Homomorphism Polynomials}
\titlerunning{Monotone Bounded Depth Formula Complexity}
\author{Balagopal Komarath}{Indian Institute of Technology Gandhinagar, India}{bkomarath@rbgo.in}{https://orcid.org/0009-0008-3007-6280}{}
\author{Rohit Narayanan}{Indian Institute of Technology Gandhinagar, India}{rohitocracy@gmail.com}{https://orcid.org/0009-0007-6763-7726}{}
\authorrunning{B. Komarath and R. Narayanan}
\keywords{Monotone complexity, bounded depth, formula complexity, graph homomorphism, algebraic complexity}
\newcommand{\coliso}{\mathsf{ColIso}}
\newcommand{\homp}{\mathsf{Hom}}
\newcommand{\cost}{\lambda}
\begin{document}

\maketitle

\begin{abstract}
We introduce \emph{baggy elimination trees}, a novel graph decomposition that generalises the classical elimination trees underlying treedepth, and use them to give a complete characterisation of the monotone bounded-depth formula complexity of graph homomorphism and coloured isomorphism polynomials. Specifically, we prove that the $\Delta$-product depth monotone formula complexity of these polynomials is $\Theta\!\left(n^{\cost_\Delta(H)}\right)$, where $\cost_\Delta(H)$ is the minimum cost of a baggy elimination tree for $H$ at BET-depth~$\Delta$.

This result closes the last open case in the programme initiated by Komarath, Pandey and Rahul~\cite{KPR23} and continued by Bhargav, Chen, Curticapean and Dwivedi~\cite{BCCD25}: tight size characterisations of monotone circuit complexity (via treewidth / bounded-depth treewidth), monotone ABP complexity (via pathwidth / bounded-depth pathwidth), and monotone formula complexity (via treedepth) were already known; our theorem supplies the missing bounded-depth formula characterisation via the new notion of bounded-depth baggy-elimination-tree cost $\cost_\Delta$, completing the picture for all three models in algebraic complexity and their fixed depth variants.

As applications, for constant-degree polynomial families we derive an almost-optimal separation between monotone circuits and monotone formulas at every fixed product depth: there exists a family computable by $O(N)$-size monotone circuits of product depth $\Delta$ that requires $\Omega(N^{\Delta/2})$-size monotone formulas of the same depth (and this exponent is optimal up to a constant factor). We also prove a strict depth hierarchy: for every $\Delta \geq 1$ and every constant $k \geq 2$, there is a constant-degree family with $O(s(N))$-size monotone formulas of product depth $\Delta$ that requires $\Omega(s(N)^k)$-size monotone formulas of product depth $\Delta - 1$.
\end{abstract}

\clearpage

\section{Introduction}
The study of graph homomorphism polynomials has emerged as a surprisingly powerful lens for understanding fundamental questions in both algorithm design and algebraic complexity. As an algorithmic tool, efficient constructions for graph homomorphisms yield optimal algorithms for various pattern counting and detection problems \cite{CDM17}. In algebraic complexity, these same polynomials provide a unified framework for defining natural families that are complete for major complexity classes like $\mathsf{VP}$ and $\mathsf{VNP}$ \cite{Durand15}.

Komarath, Pandey, and Rahul \cite{KPR23} showed that the \emph{monotone} complexity, where subtractions are disallowed in computation, of graph homomorphism polynomials is characterized by various structural parameters of the pattern graph $H$. Specifically, in the unbounded-depth setting, the monotone circuit complexity is characterized by $H$'s treewidth, the monotone ABP complexity by pathwidth, and the monotone formula complexity treedepth. Therefore, we can use known separations between these graph parameters to separate corresponding algebraic computational models.

The above characterizations naturally raise the question: how does this correspondence translate to \emph{bounded-depth} computation? Recently, Bhargav, Chen, Curticapean and Dwivedi \cite{BCCD25} provided a characterization for monotone bounded-depth circuits, linking their size to a depth-restricted variant of treewidth, and for monotone bounded-depth ABPs, linking their size to a depth-restricted variant of pathwidth. However, a significant model remained uncharacterized: monotone bounded-depth formulas. This model is computationally weaker than circuits but fundamental in algebraic complexity theory. What graph-theoretic parameter, if any, governs the trade-off between depth and size for a formula computing $\homp_{H, n}$?

In this paper, we resolve this question and thereby close the last open case in the characterization programme of \cite{KPR23, BCCD25}. We introduce a graph decomposition, which we call the \emph{Baggy elimination tree}, that are generalizations of elimination trees used to define treedepth. We define two measures for a baggy elimination tree: \emph{BET-depth or the Baggy Elimination Tree depth} ($\Delta$), and \emph{cost} at BET-depth $\Delta$ ($\cost_\Delta$). We define $\cost_\Delta(H)$ as the minimum possible cost that can be achieved for a given BET-depth $\Delta$. Our main theorem proves that this graph-theoretic parameter characterizes bounded product depth monotone formula complexity for graph homomorphism polynomials.

In the Boolean setting, monotone circuits exhibit exponential gaps in depth compared to non-monotone circuits \cite{RazW92}; superpolynomial separations between monotone circuits and formulas are also well-established~\cite{KarchmerW90}, and a strict depth hierarchy is known for constant-depth monotone formulas~\cite{AjtaiG87, KPPY84, Okolnishnikova82}. In the non-monotone setting, depth hierarchies have been established for \textit{homogeneous} arithmetic circuits~\cite{KumarS17}, which is relevant in the context of homomorphism polynomials studied in this paper. Monotone formula lower bounds for computing the iterated matrix multiplication polynomial $\mathsf{IMM}_{n,d}$ (the coloured homomorphism polynomial of a path) were first proved by Shamir and Snir~\cite{ShamirSnir77}.

For general (non-monotone) arithmetic circuits, superpolynomial lower bounds against low-depth circuits have also been established~\cite{LimayeST21}.
In the monotone setting, robust separations for the monotone depth hierarchy were recently established~\cite{CGM22}. In contrast to previous existence results relying on polynomials specifically constructed only to achieve the separation, we achieve separation via a precise structural characterization for a natural family of polynomials: we show that the separation between depth $\Delta$ and $\Delta+1$ is strictly determined by the \textit{baggy elimination tree} cost of the underlying graph patterns.

The paper is organized as follows. We introduce definitions for the model and the polynomials in Section~\ref{sec:prelim}. We define baggy elimination trees, the parameter that characterizes bounded product depth monotone formula complexity, in Section~\ref{sec:baggy}. We prove the characterization in Section~\ref{sec:main}. This result provides a precise description of the complexity of a well-studied model for a large class of important polynomials in terms of a structural graph parameter, and constitutes the main contribution of our work. In fact, for any fixed graph $H$, our upper and lower bounds only differ by constant factors. For constant-degree polynomial families, we prove an almost optimal separation between monotone circuits and monotone formulas in Section~\ref{sec:cirvsfor} and a depth hierarchy theorem for monotone formulas in Section~\ref{sec:depth}. Although better lower bounds and separations are known for monotone computation (see \cite{CGM22, KPPY84}), our main contribution here is to show that these separations can be derived from easily provable separations between graph parameters, continuing the line of work in \cite{BCCD25} and \cite{KPR23}. 

\section{Preliminaries}
\label{sec:prelim}

\begin{definition}
  A polynomial over $\mathbb{Q}$ is called \emph{monotone} if all its coefficients are non-negative. An \emph{arithmetic formula} computing a polynomial in $\mathbb{Q}[x_1, \dotsc, x_n]$ is either a variable, a field constant, or $F_1 + \dotsm + F_k$, or $F_1\dotsm F_k$ where $F_i$ for $i \in [k]$ are arithmetic formulas computing polynomials in $\mathbb{Q}[x_1, \dotsc, x_n]$. The formula is called \emph{monotone} if all constants in it are non-negative.
  
  An arithmetic formula can be naturally represented as a rooted tree where the internal nodes (called gates) are labelled $+$ or $\times$ and the leaves (called input gates) are labelled by variables or constant. The \emph{size} of a formula is then the number of edges in the tree. The \emph{product depth} of the formula is the maximum number of gates labelled $\times$ over a path from the root to an input gate.
\end{definition}

The families of polynomials that we look at in this paper enumerate graph homomorphisms or colored isomorphisms. The following definitions are from \cite{KPR23}:

\begin{definition}
  For graphs $H$ and $G$, a \emph{homomorphism} from $H$ to $G$ is a function $\phi: V(H)\mapsto V(G)$ such that $\{i, j\}\in E(H)$ implies $\{\phi(i), \phi(j)\}\in E(G)$. For an edge $e = \{i, j\}$ in $H$, we use $\phi(e)$ to denote $\{\phi(i), \phi(j)\}$.
\end{definition}

\begin{definition}
  Let $H$ be a $k$-vertex graph where its vertices are labeled by $[k]$ and let $G$ be a graph where each vertex has a color in $[k]$. Then, a \emph{colored isomorphism} of $H$ in $G$ is a subgraph of $G$ isomorphic to $H$ such that all vertices in the subgraph have different colors and for each edge $\{i, j\}$ in $H$, there is an edge in the subgraph between vertices colored $i$ and $j$.
\end{definition}

\begin{definition}\label{full:def: hompoly}
  For a pattern graph $H$ on $k$ vertices, the $n$-th \emph{homomorphism polynomial} for $H$ is a polynomial on $\binom{n}{2}$ variables $x_e$ where $e = \{u, v\}$ for $u, v\in [n]$.
  
  \begin{equation*}
    \homp_{H, n} = \sum_{\phi}\prod_{e \in E(H)} x_{\phi(e)}
  \end{equation*}

  where $\phi$ ranges over all homomorphisms from $H$ to $K_n$.
\end{definition}

Computing the homomorphism polynomial is an important intermediate step in many algorithms related to finding and counting graph patterns. Instead of the homomorphism polynomial, we consider an equivalent polynomial (See Lemma~{8} in \cite{KPR23}) called the colored isomorphism polynomial which enumerates all colored isomorphisms from a pattern to a host graph where there are $n$ vertices of each color.

\begin{definition}
  For a pattern graph $H$ on $k$ vertices, the $n$-th  \emph{colored isomorphism  polynomial} for $H$ is a polynomial on $|E(H)|n^2$ variables $x_e$ where $e = \{(i, u), (j, v)\}$ for $u, v\in [n]$ and $\{i, j\}\in E(H)$.
  
  \begin{equation*}
    \coliso_{H,n} = \sum_{u_1,\dotsc,u_k \in [n]}\prod_{\{i, j\} \in E(H)} x_{\{(i, u_i), (j, u_j)\}}
  \end{equation*}
\end{definition}

We notice that the labeling of $H$ does not affect the complexity of $\coliso_{H,n}$. Given the polynomial $\coliso_{H,n}$ for some labeling of $H$ and if $\psi$ is a relabeling of $H$, then the polynomial $\coliso_{H,n}$ for the new labeling can be obtained by the substitution $x_{\{(i, u), (j, v)\}}\mapsto x_{\{(\psi(i), u), (\psi(j), v)\}}$.

Monomials of the polynomial are computed using parse trees in formulas:

\begin{definition}
  Let $g$ be a gate in a formula $F$. A \emph{parse tree} rooted at $g$ is any rooted tree which can be obtained by the following procedure:

  \begin{enumerate}
  \item The gate $g$ is the root of the parse tree.
  \item If there is a multiplication gate $g$ in the tree, include all its children in the formula as its children in the parse tree.
  \item If there is an addition gate $g$ in the tree, pick an arbitrary child of $g$ in the formula and include it in the parse tree.
  \end{enumerate}
\end{definition}

Any gate can occur at most once in any parse tree in $F$ as $F$ is a tree. Given a parse tree $T$ that contains a gate $g$, we use $T_g$ to denote the subtree of $T$ rooted at $g$. Note that we can replace $T_g$ in $T$ with any parse tree rooted at $g$ to obtain another parse tree. Similarly, if we have two parse trees $T$ and $T'$ that both contain the same multiplication gate $g$ from the formula, then we can replace any subtree of $T_g$ with the corresponding subtree of $T'_g$ to obtain another parse tree. This is because all children of $g$ in both parse trees are the same and therefore we can apply the aforementioned replacement. We call the tree obtained by removing $T_g$ from $T$ as the \emph{tree outside $T_g$ (or $g$) in $T$}.

Let $H=(V, E)$ be a graph. A vertex $v \in V(H)$ is called a \emph{pendant vertex} if its degree is one. Throughout this paper, we assume that that pattern graph $H$ has more than one edge and is connected.

\section{Baggy Elimination Trees}
\label{sec:baggy}

In this section, we introduce the most important definition in this paper and work through some examples to understand the intuition behind this definition.

\begin{definition}[Baggy Elimination Tree]
For a graph $H$, a \emph{baggy elimination tree} $T$ is a rooted tree where each node in $V(T)$ is labeled with a non-empty ``bag'' of vertices of $H$ such that every vertex of $H$ appears in exactly one bag, and the tree satisfies: if $\{u, v\} \in E(H)$, then $u$ and $v$ are either in the same bag or in bags that are in an ancestor-descendant relationship in $T$. A leaf node $t_m$ in $T$ is a \emph{core leaf} if it contains some vertex that is not pendant in $H$. Otherwise, we call the leaf \emph{non-core}.

The \emph{BET-depth or the Baggy Elimination Tree depth} of $T$ is the maximum number of nodes on any root-to-leaf path $P = (t_1, \dotsc, t_m)$, excluding the leaf node $t_m$ if $t_m$ is a non-core leaf. The cost of a single path $P$ is the sum of the cardinalities of the bags of all nodes on that path. The \emph{cost of the tree $T$} is the maximum cost over all root-to-leaf paths in $T$. The $\Delta$-BET-depth baggy elimination tree cost, denoted by $\cost_\Delta(H)$, is the cost of the minimum cost baggy elimination tree of BET-depth at most $\Delta$.
\end{definition}

To motivate the above definition, we consider monotone formulas for $\coliso_{P_7, n}$, where $P_7$ is the path on $7$ vertices. This polynomial has $n^7$ monomials. So it has an $n^7$-size monotone formula of product depth one. Komarath, Pandey, and Rahul \cite{KPR23} describe how to construct $O(n^3)$-size monotone formulas for this polynomial. Their construction has a product depth of three.

\begin{example}

Consider the following formula for $\coliso_{P_7, n}$:

{\footnotesize

\[
\sum_{i_2, i_4, i_6 \in [n]} \biggl(
\sum_{i_1 \in [n]} x_{\{(1,i_1),(2,i_2)\}} 
\sum_{i_3 \in [n]} x_{\{(2,i_2),(3,i_3)\}} \, x_{\{(3,i_3),(4,i_4)\}}
\sum_{i_5 \in [n]} x_{\{(4,i_4),(5,i_5)\}} \, x_{\{(5,i_5),(6,i_6)\}} 
\sum_{i_7 \in [n]} x_{\{(6,i_6),(7,i_7)\}} \biggr)
\]
}

This formula has size $O(n^4)$ and has product depth two. It corresponds to the baggy elimination tree of BET-depth two for $P_7$ shown in Figure~\ref{fig:path-tree}. The size of the formula is determined by the cost of the tree and the nesting of product gates increasing the product depth to two is contributed only by the core leaves in the baggy elimination tree. This is the reason for distinguishing between core and non-core leaves in the definition of baggy elimination trees.

\begin{figure}\centering
    \begin{tikzpicture}[
  core/.style={circle, draw=green!60!black, fill=green!20, thick, minimum size=18pt},
  noncore/.style={circle, draw=purple!80!black, fill=purple!10, thick, minimum size=18pt},
  corecenter/.style={ellipse, draw=orange!80!black, fill=orange!20, thick, minimum width=40pt, minimum height=25pt}
]
\node[corecenter] (core) at (0,0) {\{2,4,6\}};

\node[noncore] (a) at (-3,-2) {\{1\}};
\node[core]    (b) at (-1,-2) {\{3\}};
\node[core]    (c) at (1,-2)  {\{5\}};
\node[noncore] (d) at (3,-2)  {\{7\}};

\foreach \x in {a,b,c,d}
  \draw[thick] (core) -- (\x);

\node[noncore, label=right:{Non core leaf}] (legend1) at (2.5,-3.5) {};
\node[core, below=4pt of legend1, label=right:{Core leaf}] (legend2) {};

\end{tikzpicture}
\caption{Baggy elimination tree of BET-depth two for $P_7$}
\label{fig:path-tree}
\end{figure}
\end{example}

Baggy elimination trees are a simple generalization of elimination trees used to define treedepth. In an elimination tree, each bag has to contain exactly one vertex. Observe that any baggy elimination tree of cost $c$ can be converted into an elimination tree of cost $c$ by replacing bags with more than one vertex with a path containing those vertices. Therefore, for any $\Delta$, we have treedepth of $H$ is at most the  $\cost_\Delta(H)$ for all $H$.

\begin{remark}
    The graph that is just an edge has a formula of product depth zero that has linear size and this is optimal. Disconnected graphs can be characterized by defining \emph{baggy elimination forests} instead of trees and by defining the product depth of such forests as one plus the max of product depths of trees in the forest if there is more than one tree in the forest. In the interest of simplicity, we omit this generalization from this paper.
\end{remark}
\clearpage
\section{Characterizing Bounded Product Depth Monotone Formulas}
\label{sec:main}

Our main theorem is as follows:

\begin{theorem}
    For any connected graph $H$ on more than two vertices, the polynomial family $\coliso_{H}$ has $\Delta$-product depth monotone formula complexity of $\Theta(n^{\cost_\Delta(H)})$.
    \label{thm:main}

\end{theorem}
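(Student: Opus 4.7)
My plan is to prove matching $\Theta(n^{\cost_\Delta(H)})$ upper and lower bounds. For the upper bound, I would take a baggy elimination tree $T$ of product depth $\Delta$ and cost $c = \cost_\Delta(H)$ for $H$, and construct a formula by recursion on $T$. At each node $v$ of $T$ with bag $B_v$, the sub-formula sums over all assignments $\beta : B_v \to [n]$, multiplies in the variables for edges lying inside $B_v$ or joining $B_v$ to vertices already fixed by ancestor bags, and multiplies in the recursively-built sub-formulas of each child of $v$ (each parameterized by the combined ancestor assignment). At a non-core leaf holding a pendant vertex $v$ whose unique neighbor $u$ is in an ancestor bag, the sub-formula degenerates to a linear form of size $n$ (the sum of $n$ edge variables, one per color choice for $v$), contributing no product gate. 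A routine induction shows the total size is $O(n^c)$ and that the product depth equals the product depth $\Delta$ of $T$, since each non-leaf node and each core leaf contributes exactly one product level while non-core leaves contribute none.

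For the lower bound, I would start with a monotone formula $F$ of size $s$ and product depth $\Delta$ for $\coliso_{H,n}$, normalize it (at constant cost) into alternating $\Sigma\Pi$ form with $\Delta$ product layers, and extract a baggy elimination tree from parse trees. Each of the $n^k$ monomials $m = (u_1, \ldots, u_k)$ of $\coliso_{H,n}$ is produced by some parse tree $T_m$; contracting addition gates yields a product skeleton $T'_m$ of depth $\Delta$ whose leaves are the variables of $m$. For each vertex $i \in V(H)$, define $g_i$ as the LCA in $T'_m$ of the leaves mentioning $i$, and form a labelled tree $S_m$ by assigning each product gate $g$ the bag $B_g = \{i : g_i = g\}$ (and contracting empty bags). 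Since non-pendant vertices have degree at least two, their LCAs are internal product gates, so every leaf of $S_m$ holds only pendant vertices and is thus non-core, which bounds the product depth of $S_m$ by $\Delta$.

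I would then verify $S_m$ is a valid baggy elimination tree of $H$. The ``edges imply comparable bags'' direction is immediate, as each edge's unique leaf in $T_m$ sits below both endpoint LCAs. The ``comparable bags imply edges'' direction requires a monotone swap argument: if $g_i$ were a strict ancestor of $g_j$ with $\{i,j\} \notin E(H)$, then splicing into $T_m$ the sub-parse-tree at $g_j$ drawn from a second parse tree (that reassigns $u_j$ while preserving the remaining coordinates) yields a hybrid parse tree whose monomial lies in the polynomial $F$ computes, and hence in $\coliso_{H,n}$ by monotonicity, but the non-edge combined with the existence of $i$-leaves below $g_i$ but outside $g_j$ can be exploited to force a monomial inconsistent with $\coliso_{H,n}$, a contradiction. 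This gives $\cost(S_m) \geq \cost_\Delta(H)$. To convert tree cost into formula size, I would use a layered covering argument generalizing the base case $\Delta = 1$: a $\Sigma\Pi$ formula $\sum_i \prod_j L_{ij}$ for $\coliso_{H,n}$ decomposes into ``rectangles'' each fixing one endpoint's color per edge, and an AM--GM estimate across rectangles yields $\Omega(n^{\cost_1(H)})$; induction on the $\Delta$ product layers then delivers the bound for general $\Delta$. The main obstacle will be making the swap argument for the second direction fully rigorous across every configuration of leaves below $g_j$, and cleanly organizing the recursive covering across all $\Delta$ layers; a possible workaround is to show $S_m$ is a baggy elimination tree of some supergraph $H' \supseteq H$ via only the first direction, and then argue that $\cost_\Delta(H') \geq \cost_\Delta(H)$ on the original vertex set using the monotonicity of the parameter.
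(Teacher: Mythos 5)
Your upper bound and the first half of your lower bound coincide with the paper's: the recursive formula built from a minimum-cost baggy elimination tree (with non-core leaves contributing a product-free linear form), and the extraction of a baggy elimination tree from a parse tree by lifting each vertex of $H$ to the LCA of the leaves mentioning it, observing that non-pendant vertices land on multiplication gates so the extracted tree has product depth at most $\Delta$. Your worry about the ``comparable bags imply edges'' direction is understandable given the paper's ``if and only if'' phrasing, but the paper itself only verifies the forward direction; the parameter is used as a minimum over trees in which every edge joins same-bag or comparable vertices, so your proposed workaround (only the forward direction matters for the minimization) is essentially what the paper does implicitly, and no swap argument is needed there.

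The genuine gap is in the final step, where you must convert ``every parse tree yields a tree of cost at least $\cost_\Delta(H)$'' into a size bound. Your plan --- normalize to layered $\Sigma\Pi$ form, decompose into rectangles, apply an AM--GM estimate, and induct on the $\Delta$ product layers --- is not developed and replaces the one place where the real work happens; it is far from clear it can be carried out, since intermediate product gates have unbounded fan-in and their sub-formulas do not compute colored isomorphism polynomials of subpatterns. The paper instead runs a direct counting argument: map each of the $n^{|V(H)|}$ monomials $m$ to the gate $g$ at the bottom of the maximum-cost root-to-leaf path of the baggy tree extracted from $m$'s parse tree, then use a subtree-swapping argument --- this is where monotonicity enters: splicing the subtree rooted at an ancestor $g'$ of $g$ between two parse trees yields another parse tree of the formula, whose monomial must be a legal monomial of $\coliso_{H,n}$ and hence cannot assign two colors to the same vertex --- to show that $g$ determines the colors of all $\cost_\Delta(H)$ vertices lifted to $g$ and its ancestors. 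Each gate therefore has at most $n^{|V(H)|-\cost_\Delta(H)}$ preimages, forcing at least $n^{\cost_\Delta(H)}$ gates. You have the swap idea in your toolkit but deploy it only for the (unneeded) converse-direction check; without redirecting it to pin down the colors a single gate can see, your lower bound does not close.
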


We prove the upper bound first.

\begin{proof}
Let $H$ be the pattern graph. For any $\Delta \geq 1$, we construct a monotone formula $\mathcal{F}$ for $\coliso_{H, n}$ with product depth $\Delta$ and size $O(n^{\cost_\Delta(H)})$. Let $T$ be a baggy elimination tree $T$ for $H$ with BET-depth $\Delta$ and cost $k = \cost_\Delta(H)$.

We build a recursive formula $\mathcal{F}$ whose structure mirrors the structure of the tree $T$. For any node $t$ in $T$, we use $X_t$ to denote all vertices in $H$ in the bag $t$ and let $A(t)$ denote the set of $t$'s proper ancestors. Let $X_{A(t)} = \bigcup_{a \in A(t)} X_a$ be the set of all vertices in its ancestors. Let $\phi_{A(t)}$ be an assignment $\phi_{A(t)}: X_{A(t)} \to [n]$. We define a formula $\mathcal{F}(t \mid \phi_{A(t)})$ that computes the polynomial for the sub-problem induced by $t$ and its descendants, given the fixed assignment $\phi_{A(t)}$ to all its ancestors.

First, we define short names for two classes of monomials for convenience. Let $\mathsf{EM}(\phi_t, X_t)$ (Edge Monomial) be the product of variables for edges \emph{within} the bag $X_t$:
\[
\mathsf{EM}(\phi_t, X_t)
= \prod_{\substack{\{u,v\}\in E(H) \\ u,v\in X_t}} x_{\{\phi_t(u), \phi_t(v)\}}
\]

EM is 1 if $X_t$ is an independent set. Now, let $\mathsf{ALM}(\phi_{A(t)}, \phi_t)$ (Ancestor Link Monomial) be the product of variables for all edges \emph{between} the current bag and \emph{an ancestor bag}:
\[
\mathsf{ALM}(\phi_{A(t)}, \phi_t)
= \prod_{\substack{\{u,v\}\in E(H) \\ u\in X_{A(t)},\, v\in X_t}} 
x_{\{\phi_{A(t)}(u), \phi_t(v)\}}
\]

If no vertex in $X-t$ is adjacent to a vertex in an ancestor bag of t, then ALM is 1. The required formula is $\mathcal{F}(r \mid \emptyset)$, where $r$ is the root node (it has no ancestors). If there are no edges contributing to the product defining EM or ALM, the corresponding monomial is equal to 1. We construct this formula inductively starting from the leaves of $t$. If $t$ is a leaf node in $T$ with ancestors $A(t)$, the formula is:
\[
    \mathcal{F}(t \mid \phi_{A(t)}) = \sum_{\phi: X_t \to [n]} \left( \mathsf{EM}(\phi, X_t) \cdot \mathsf{ALM}(\phi_{A(t)}, \phi) \right)
\]

If $t$ is an internal node with children $u_1, \dots, u_m$, its formula is:
\[
    \mathcal{F}(t \mid \phi_{A(t)}) = \sum_{\phi_t: X_t \to [n]} \left( \text{EM}(\phi_t, X_t) \cdot \text{ALM}(\phi_{A(t)}, \phi_t) \cdot \prod_{i=1}^{m} \mathcal{F}(u_i \mid \phi_{A(t)} \cup \phi_t) \right)
\]

It is easy to see that the formula is correct using an induction. We now prove that it has product depth $\Delta$ and size $O(n^k)$. Observe that each node in the tree along any root to leaf path contributes at most one to the product depth. It now suffices to show that non-core leaves of $T$ do not add to the product depth of the formula.

Consider a non-core leaf $t$ in $T$. We claim that without loss of generality we may assume $t$ contains exactly one pendant vertex. Suppose $t$ contains two pendant vertices $u$ and $v$. Since $H$ is connected and has more than one edge, there cannot be an edge between $u$ and $v$. Therefore, $u$ and $v$ are each adjacent only to some vertex in a proper ancestor bag. We may replace $t$ in $T$ with two non-core leaves $t_1$ and $t_2$ containing $u$ and $v$ respectively, without increasing the cost or the BET-depth of $T$ (non-core leaves do not contribute to product depth by definition). Repeating this argument, we may assume without loss of generality that $t = \{u\}$ for some pendant $u$ in $H$. Therefore, in the formula $\mathcal{F}(t \mid \phi_{A(t)})$, we have $\mathsf{EM}(., .) = 1$ and that $\mathsf{ALM}(., .)$ is a single variable. So there are no multiplication gates in this formula. By definition, the longest core path in $T$ has length $\Delta$. Therefore, the constructed formula $\mathcal{F}$ has a product depth of at most $\Delta$.

We now prove the size upper-bound. Observe that the fan-in of each $+$ gate at the top-level in $\mathcal{F}(t \mid .)$ is $n^k$ where $k = |X_t|$. The fan-ins of all $\times$ gates are constant (independent\footnote{These constants depend on $T$ and therefore $H$. But since we regard $H$ as a fixed pattern graph, we can absorb this cost into the $O(.)$ notation.} of $n$). Therefore, the total size of the formula is $O(n^{|X_{t_1}| + \dotsm + |X_{t_k}| = \cost_\Delta(H)})$ where $t_1, \dotsc, t_k$ is the maximum cost path in $T$.

\end{proof}

Now, we prove the lower bound. 

\begin{proof}
    
The proof is similar to the lower bound proof of Theorem~{3} in \cite{KPR23}. Given a parse tree computing a monomial of $\coliso_{H, n}$, we construct a baggy elimination tree for $H$ from the parse tree. We then show that only a few monomials can be computed using a gate that corresponds to the leaf bag in the baggy elimination tree that achieves the maximum cost. This implies a lower bound on the total number of gates.

Let $m$ be a monomial in $\coliso_{H,n}$ and let $T$ be a parse tree of $\mathcal{F}$ computing $m$ after removing all $+$ gates from the parse tree and attaching the child of $+$ gate to the $+$ gate's parent. We construct a baggy elimination tree $B$ from $T$ as follows: For each vertex $i \in V(H)$, we find all leaves in $T$ corresponding to variables that involve $i$ (e.g., $x_{\{(i, f(i)), (j, f(j))\}
}$ for all neighbors $j$ of $i$ in $H$). We put $i$ into the bag that is the Least Common Ancestor (LCA) of these leaves in $T$, denoted by $t_i = \text{LCA}(i)$. After this process, if a non-root node is empty, we remove it from $B$, attaching any children to the removed node's parent. We will see later that the root bag is either non-empty or has exactly one child after this process. If the root bag is empty and has one child, we can make the child the root. The nodes of our baggy tree $B$ then correspond to gates $g \in T$ that are an LCA for at least one vertex. We denote the bag for a node $g$ as $B(g) = \{i \in V(H) \mid \text{LCA}(i) = g\}$. The tree structure of $B$ is thus inherited from $T$.

We claim that $B$ is a valid baggy elimination tree of BET-depth $\Delta$ for $H$. For any edge $\{i, j\} \in E(H)$, in the initial tree (Refer Figure~\ref{fig:lift}, Part~(a)), the corresponding leaf in $T$ is a descendant of both $t_i$ and $t_j$, implying $t_i$ and $t_j$ must also be in an ancestor-descendant relationship. We will now prove our earlier claim about the root bag. Suppose the root bag is empty and has more than one child. Let $u$ and $v$ be two vertices of $H$ in two distinct children of the root. Observe that all vertices adjacent to $u$ or $v$ in $H$ are in the respective subtrees. So there is not path between $u$ and $v$ in $H$, which contradicts the assumption that $H$ is connected.

We claim the BET-depth of $T$ is at most $\Delta$. A non-pendant vertex $i$ of $H$ appears in multiple leaves in $T$, forcing its $t_i$ to be the LCA of distinct leaves in $B$, which must correspond to a multiplication gate. Therefore, any leaf in $B$ that corresponds to an input gate can only contain pendant vertices in $H$. So the BET-depth of $B$ is bounded by the product depth of $\mathcal{F}$.

\begin{figure}
    \centering
    \begin{tikzpicture}[
    level 1/.style={sibling distance=2.2cm},
    level 2/.style={sibling distance=1cm},
    every node/.style={draw, circle, minimum size=0.9cm, inner sep=0pt},
    orange node/.style={fill=orange!30, draw=orange!70!black, thick},
    purple node/.style={fill=purple!30, draw=purple!70!black, thick},
    green node/.style={fill=green!30, draw=green!70!black, thick},
    edge from parent/.style={draw, thick},
    label/.style={draw=none, fill=none} 
]
\centering

\node[orange node] (root1) {}
    child { node[purple node] (c1_1) {\{1,2\}}}
    child { node[orange node] (c1_2) {}
        child { node[green node] (c1_2_1) {\scriptsize\{2,3\}}} 
        child { node[green node] (c1_2_2) {\scriptsize\{3,4\}}}
    }
    child { node[orange node] (c1_3) {}
        child { node[green node] (c1_3_1) {\scriptsize\{4,5\}}} 
        child { node[green node] (c1_3_2) {\scriptsize\{5,6\}}}
    }
    child { node[purple node] (c1_4) {\{6,7\}}};
\node[label] at ( $(c1_2_1)!0.5!(c1_3_1)$ ) [yshift=-1.2cm] {a) Parse tree};


\begin{scope}[xshift=8.5cm] 
\node[orange node] (root2) {\{2,4,6\}}
    child { node[purple node] (c2_1) {\{1\}}}
    child { node[orange node] (c2_2) {\{3\}}
        child { node[green node] (c2_2_1) {}} 
        child { node[green node] (c2_2_2) {}}
    }
    child { node[orange node] (c2_3) {\{5\}}
        child { node[green node] (c2_3_1) {}} 
        child { node[green node] (c2_3_2) {}}
    }
    child { node[purple node] (c2_4) {\{7\}}};
    
\node[label] at ( $(c2_2_1)!0.5!(c2_3_1)$ ) [yshift=-1.2cm] {b) Lifting};
\end{scope}


\begin{scope}[yshift=-6cm, xshift=4.25cm] 
\node[orange node] (root3) {\{2,4,6\}}
    child { node[purple node] (c3_1) {\{1\}}}
    child { node[green node] (c3_2) {\{3\}}}
    child { node[green node] (c3_3) {\{5\}}}
    child { node[purple node] (c3_4) {\{7\}}};
    
\node[label] at ( $(c3_2)!0.5!(c3_3)$ ) [yshift=-1.2cm] {c) Baggy elimination tree};
\end{scope}

\end{tikzpicture}
\vspace{-2.1cm}
\caption{Parse tree to baggy elimination tree for $P_7$}
\label{fig:lift}
\end{figure}

Consider a monomial $m$ and a parse tree $T$ computing it in $\mathcal{F}$. After the construction of the baggy elimination tree from $T$ as detailed above, let $g$ be the gate in $T$ that corresponds to the leaf in $T$ that maximizes the cost. Let $d = \cost_\Delta(H)$. Assume without loss of generality that $1, \dotsc, d$ are the vertices of $H$ appearing in $g$ and its ancestors in the baggy elimination tree. The monomial $m$ fixes an assignment for all vertices in $H$ to $[n]^d$. Let $\phi_m(i) = u_i$ for $i \in [d]$ in this assignment. We claim that for any monomial $m'$ for which $g$ appears in a parse tree computing $m'$, we must have $\phi_{m'}(i) = u_i$ for $i \in [d]$ as well. Suppose for contradiction that $\phi_{m'}(i) = v_i \neq u_i$ for some $i \in [d]$. Let $T'$ be the parse tree for $m'$. Let $g'$ be the gate in $T$ such that $i$ is in the bag corresponding to $g'$. Since there is a unique path from $g$ to $g'$ in $\mathcal{F}$, the gate $g'$ must appear in $T'$ as well. We say that a vertex $i$ in $H$ is contained in a subtree $T$ of some parse tree if there is some input gate labelled by $x_{\{(i, .), (j, .)\}}$ in the subtree $T$. It is worth noting that the ordering is irrelevant since the variables are indexed by unordered pairs; only the incidence of the edge to i is used in the argument. We now split the proof into cases:
\begin{itemize}
    \item If $g'$ is an input gate, then it is labelled $x_{\{(i, w_i), (j, w_j)\}}$ for some $w_i, w_j \in [n]$ and $j \in V(H)$. Since $g'$ is in the parse tree for both $m$ and $m'$, it must be that $w_i = u_i$ and $w_i = v_i$, a contradiction.
    \item The gate $g'$ is a multiplication gate. The vertex $i$ occurs in at least two subtrees of $g'$ as we lifted $i$ to $g'$. We split into two cases:
    \begin{itemize}
        \item If $T'_{g'}$ contains $i$, then we use one such subtree to replace the corresponding subtree in $T$. The resulting parse tree computes a monomial that contains variables indexed by $(i, u_i)$ and $(i, v_i)$, a contradiction.
        \item If $T'_{g'}$ does not contain $i$, then $i$ must occur in the tree outside $T'_{g'}$ in $T'$. We replace $T'_{g'}$ with $T_{g'}$ in $T'$. The resulting parse tree again computes a monomial that contains variables indexed by $(i, u_i)$ and $(i, v_i)$, a contradiction.
    \end{itemize}
\end{itemize}

Now, we prove the lower bound. The polynomial $\coliso_{H, n}$ has $n^k$ monomials where $k = |V(H)|$. Any monomial can be mapped to a gate as above such that a gate in the image has at most $n^{k - \cost_\Delta(H)}$ pre-images. Therefore, there must be at least $n^{\cost_\Delta(H)}$ such gates.

\end{proof}

\section{Separating Circuits from Formulas in Bounded Depth}
\label{sec:cirvsfor}

The full $b$-ary tree of depth $\Delta$ denoted $F_{b, \Delta}$ is defined as follows: $F_{b, 1}$ is the single node tree. $F_{b, \Delta + 1}$ is a root node with $b$ subtrees each isomorphic to $F_{b, \Delta}$. We note that the treedepth of $F_{b, \Delta}$ is $\Delta$ for $b \geq 2$. It is at most $\Delta$ because the tree itself is an elimination tree. For $\Delta = 1$, the lower bound is true. For $\Delta > 1$, since $b \geq 2$, at least one $F_{b, \Delta-1}$ subtree is untouched in the root. So the depth of any elimination tree is at least $1 + (\Delta - 1) = \Delta$.

Bhargav et.~al.~\cite{BCCD25} showed that the $\Delta$-product depth monotone circuit complexity of $\coliso_{F_{b, \Delta + 1}}$ is $\Theta(n^2)$. Theorem~\ref{thm:main} and the above lower bound on treedepth shows that the $\Delta$-product depth monotone formula complexity of this polynomial family is $\Theta(n^\Delta)$. Therefore, we get the following theorem separating the power of circuits and formulas of bounded product depth.

\begin{theorem}
    For any $\Delta \geq 1$, there is a constant-degree (does not depend on $N$) polynomial family that is computable by $O(N)$-size monotone circuits of product depth $\Delta$ but requires $\Omega(N^{\Delta/2})$ monotone formulas of product depth $\Delta$, where $N$ is the number of variables in the formula.
\end{theorem}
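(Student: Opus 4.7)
The plan is to instantiate the separation with the constant-size pattern graph $H = F_{b, \Delta+1}$ for any fixed $b \geq 2$, yielding the polynomial family $\coliso_H$. Since $H$ has only a constant number of vertices and edges (depending on the constants $b$ and $\Delta$), $\coliso_H$ is a constant-degree family, and the number of variables is $N = |E(H)| \cdot n^2 = \Theta(n^2)$, so $n = \Theta(\sqrt{N})$. Also, for $b \geq 2$ and $\Delta \geq 1$, $H$ is connected and has more than two vertices, so the hypotheses of Theorem~\ref{thm:main} are satisfied.

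The upper bound is immediate: by \cite{BCCD25}, cited just before the theorem, $\coliso_H$ admits monotone circuits of product depth $\Delta$ and size $\Theta(n^2) = O(N)$.

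For the lower bound, I would apply Theorem~\ref{thm:main}, which states that any product-depth-$\Delta$ monotone formula for $\coliso_H$ has size $\Theta(n^{\cost_\Delta(H)})$; it therefore suffices to show $\cost_\Delta(H) \geq \Delta$. Two ingredients already in the excerpt combine to give this. First, the treedepth of $F_{b, \Delta+1}$ is $\Delta+1$ for $b \geq 2$, as argued in the paragraph preceding the theorem. Second, by the bag-flattening observation from Section~\ref{sec:baggy}, any baggy elimination tree of cost $c$ can be converted into an ordinary elimination tree of depth $c$ by replacing each multi-vertex bag with a path of singletons; consequently $\cost_\Delta(H)$ is bounded below by the treedepth of $H$. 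Combining, $\cost_\Delta(H) \geq \Delta + 1 \geq \Delta$, so Theorem~\ref{thm:main} gives a formula lower bound of $\Omega(n^\Delta) = \Omega(N^{\Delta/2})$.

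The main (rather mild) obstacle is bookkeeping: converting between the two natural size parameters, namely $n$ (the per-colour class size of the host graph, which controls the exponent in Theorem~\ref{thm:main}) and $N$ (the number of variables of the polynomial, which is the input size of the separation statement). Once this translation and the cost lower bound are established, the $O(N)$ circuit from \cite{BCCD25} together with the $\Omega(N^{\Delta/2})$ formula lower bound from Theorem~\ref{thm:main} immediately yield the stated separation.
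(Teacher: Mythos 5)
Your proposal is correct and follows essentially the same route as the paper: the pattern graph $F_{b,\Delta+1}$, the $O(N)$ circuit upper bound from \cite{BCCD25}, and the formula lower bound obtained by combining Theorem~\ref{thm:main} with the treedepth of $F_{b,\Delta+1}$ being $\Delta+1$. Your explicit use of the bag-flattening observation in the direction $\cost_\Delta(H) \geq \mathrm{td}(H)$ is exactly the step the paper relies on implicitly (and states somewhat loosely in Section~\ref{sec:baggy}), so no gap remains.
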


\begin{remark}
    Observe that a product depth $\Delta$ monotone circuit of size $s$ can be converted into a product depth $\Delta$ monotone formula of size $O(s^{2\Delta})$ by simply duplicating gates as needed. Therefore, this separation is optimal up to constant factors independent of $\Delta$ in the exponent of $N$.
\end{remark}

\begin{remark}
Consider $\mathrm{ColIso}_{P_{10},n}$ and $\mathrm{ColIso}_{C_9,n}$, where
$P_{10}$ is the path on $10$ vertices and $C_9$ is the cycle on $9$ vertices. Both polynomials have degree $9$. By definition of pruned-$\Delta$-treewidth ($ptw_{\Delta}(H)$)for a graph H \cite{BCCD25}, we have $ptw_{3}(P_{10})$ and $ptw_{3}(C_9)$ to be 2, and both families admit $O(n^3)$-size monotone circuits at product depth 3.

On the other hand, Figure~\ref{fig:p10} shows a baggy elimination tree of BET-depth 3 for $P_{10}$ with $\cost_3(P_{10})=4$. By Theorem~\ref{thm:main}, $\mathrm{ColIso}_{P_{10},n}$ monotone circuits of size $\Omega(n^4)$ at product depth 3. 

We claim $\lambda_3(C_9)\geq 5$. Let $T$ be any baggy elimination tree for $C_9$ and let $r$ denote its root bag. Note that since there are no pendant vertices in $C_9$, no $T$ will have a non-core leaf. We now case-split on $|r|$.
\begin{itemize}
  \item $|r|\geq 3$: Removing the vertices of $r$ from $C_9$ leaves at least one edge component, whose contribution to the cost of $T$ is $2$ (since any path on at least one edge has treedepth $\geq 2$). Hence the maximum-cost root-to-leaf path in $T$ has cost at least $3+2=5$.
  \item $|r|=2$: The optimal choice of two vertices (see Figure~\ref{fig:C9}) leaves atleast a $P_4$, and $\cost_2(P_4) \geq 3$. Now the maximum-cost is atleast $2+3=5$.
  \item $|r|=1$: Removal of one vertex leaves a $P_8$ with $\cost_2(P_8) \geq 4$,
    giving total cost $\geq 1+4=5$.
\end{itemize}
In every case $\lambda_3(C_9)\geq 5$, so by Theorem~\ref{thm:main} the monotone formula complexity of $\mathrm{ColIso}_{C_9,n}$ at product depth $3$ is $\Omega(n^5)$.

Thus, $\mathrm{ColIso}_{C_9,n}$, a degree-$9$ polynomial family admitting $O(n^3)$-size monotone circuits, requires $\Omega(n^5)$-size monotone formulas of product depth $3$, strictly improving on the $O(n^3)$ vs.\ $\Omega(n^4)$ separation obtained from $P_{10}$ at the same degree and depth. Since our primary objective  and focus is characterization of model rather than separation,  maximize separation, we present these examples to show the potential of the framework and leave an exhaustive search for optimal separations to future work.
\end{remark}

\begin{figure}
\centering

\begin{minipage}{0.48\textwidth}
\centering
\begin{tikzpicture}[
    scale=0.75, transform shape,
    level 1/.style={sibling distance=2.4cm},
    level 2/.style={sibling distance=1.4cm},
    every node/.style={draw, circle, minimum size=0.9cm, inner sep=1pt},
    orange node/.style={fill=orange!30, draw=orange!70!black, thick},
    green node/.style={fill=green!30, draw=green!70!black, thick},
    red node/.style={fill=red!30, draw=red!70!black, thick},
    edge from parent/.style={draw, thick}
]

\node[orange node] {$\{4,7\}$}
    child {
        node[orange node] {$\{2\}$}
            child { node[red node] {$\{1\}$} }
            child { node[green node] {$\{3\}$} }
    }
    child {
        node[green node] {$\{5,6\}$}
    }
    child {
        node[orange node] {$\{9\}$}
            child { node[green node] {$\{8\}$} }
            child { node[red node] {$\{10\}$} }
    };

\end{tikzpicture}

\caption{A cost-4 BET-depth-3 baggy elimination tree for $P_{10}$}
\label{fig:p10}
\end{minipage}
\hfill
\begin{minipage}{0.48\textwidth}
\centering
\begin{tikzpicture}[
    scale=0.75, transform shape,
    level 1/.style={sibling distance=3.2cm}, 
    level 2/.style={sibling distance=1.8cm},
    every node/.style={draw, circle, minimum size=0.9cm, inner sep=1pt},
    orange node/.style={fill=orange!30, draw=orange!70!black, thick},
    green node/.style={fill=green!30, draw=green!70!black, thick},
    edge from parent/.style={draw, thick}
]

\node[orange node] {$\{1,6\}$}
    child {
        node[orange node] {$\{3,4\}$}
            child { node[green node] {$\{2\}$} }
            child { node[green node] {$\{5\}$} }
    }
    child {
        node[orange node] {$\{8\}$}
            child { node[green node] {$\{7\}$} }
            child { node[green node] {$\{9\}$} }
    };

\end{tikzpicture}

\caption{A cost-5 BET-depth-3 baggy elimination tree for $C_{9}$}
\label{fig:C9}
\end{minipage}

\end{figure}

\section{Product Depth Hierarchy for Monotone Formulas}
\label{sec:depth}

We show that $F_{b, \Delta + 2}$ has high cost for BET-depth $\Delta$ baggy elimination trees.

\begin{theorem}
    For any $b > \Delta \geq 1$, we have $\cost_{\Delta}(F_{b, \Delta + 2}) \geq b + \Delta$.
\end{theorem}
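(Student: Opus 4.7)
I would proceed by induction on $\Delta$. For the base case $\Delta = 1$, product depth one forces every non-pendant vertex to lie at depth one, i.e., in the root bag. Since $F_{b, 3}$ has exactly $b + 1$ non-pendants (its root together with its $b$ children), the root bag has size at least $b + 1$, so the cost is at least $b + 1$.

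For the inductive step, let $T$ be a baggy elimination tree for $F_{b, \Delta+2}$ of product depth at most $\Delta$, let $X_1$ be its root bag, and let $r$ denote the root of $F_{b, \Delta+2}$, with children $c_1, \dots, c_b$ and corresponding $c_i$-rooted subtrees $S_i$, each isomorphic to $F_{b, \Delta+1}$. I would also record a side observation: the bag-to-path conversion from the preliminaries shows that any baggy elimination tree of cost $c$ yields an elimination tree of depth $c$, whence $\cost_{\Delta'}(H)$ is at least the treedepth of $H$ for every $\Delta'$; applied to $F_{b, \Delta}$, which has treedepth $\Delta$, this gives $\cost_{\Delta-1}(F_{b, \Delta}) \geq \Delta$. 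The main case split is whether $r \in X_1$. In the case $r \in X_1$ with some $c_i \notin X_1$, the connectedness of $S_i$ together with the ancestor-descendant covering condition forces all of $S_i$ into a single child subtree $T^{(k)}$ of $X_1$, and restricting $T^{(k)}$ to $S_i$ gives a baggy elimination tree for $F_{b, \Delta+1}$ of product depth at most $\Delta - 1$; the inductive hypothesis then contributes at least $b + \Delta - 1$ along some path in $T^{(k)}$, for a total of at least $|X_1| + (b + \Delta - 1) \geq b + \Delta$. If instead $r \in X_1$ and all $c_i \in X_1$, then $|X_1| \geq b + 1$ and I would look at the $b^2$ grandchildren subtrees of $F_{b, \Delta+2}$, each a copy of $F_{b, \Delta}$: either some such copy is entirely disjoint from $X_1$ and the side observation yields a sub-path of cost at least $\Delta$ below $X_1$, giving total $(b+1) + \Delta$; or all $b^2$ such copies meet $X_1$, so $|X_1| \geq (b+1) + b^2$, which already exceeds $b + \Delta$ since $b > \Delta$.

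The case $r \notin X_1$ is subtler. Here $r$ sits in some child subtree $T^{(1)}$ of $X_1$, and because $F_{b, \Delta+2}$ is connected while no edge can join vertices in incomparable bags of $T$, every vertex of $F_{b, \Delta+2}$ must lie in $X_1 \cup V(T^{(1)})$, so $T^{(1)}$ is in fact the only child subtree of $X_1$. I would then rerun the same two-level analysis inside $T^{(1)}$: if some $S_i$ avoids $X_1$ the inductive hypothesis applied inside $T^{(1)}$ closes it out; otherwise every $S_i$ meets $X_1$, forcing $|X_1| \geq b$, and either some grandchild subtree $F_{b, \Delta}$ avoids $X_1$ (the side observation contributes an additional $\Delta$) or all $b^2$ grandchild subtrees meet $X_1$ and $|X_1| \geq b^2 \geq b + \Delta$ by $b > \Delta$. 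I expect the main obstacle to be the connectivity argument that collapses the root of $T$ to a single child subtree when $r \notin X_1$, together with the bookkeeping in each sub-case to verify that either an intact substructure remains (so the inductive hypothesis or the side observation applies) or that $|X_1|$ alone already reaches $b + \Delta$.
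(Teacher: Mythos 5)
Your case split on whether each child $c_i$ of $r$ lies in the root bag $X_1$ is the wrong dichotomy, and it opens a genuine gap in the sub-case ``$r \in X_1$ and some $c_i \notin X_1$.'' There you assert that connectedness forces all of $S_i$ into a single child subtree $T^{(k)}$ of the root of $T$, but this only follows when $S_i$ is entirely disjoint from $X_1$. Since $X_1$ is an ancestor of every bag, nothing prevents, say, a child $d$ of $c_i$ from sitting in $X_1$ while $c_i$ and the subtrees hanging below $d$ sit in different child subtrees of $T$'s root; then $S_i \not\subseteq V(T^{(k)})$, the restriction of $T^{(k)}$ to $S_i$ is not a baggy elimination tree for $F_{b, \Delta+1}$, and the inductive hypothesis cannot be invoked. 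The configuration ``$r \in X_1$, some $c_i \notin X_1$, yet every $S_j$ meets $X_1$'' is covered by no branch of your argument. The paper avoids this by splitting instead on whether some \emph{entire} subtree $S_i$ is disjoint from the root bag: if yes, connectivity really does confine $S_i$ to one child subtree, which has product depth at most $\Delta - 1$, so the inductive hypothesis gives $b + \Delta - 1$, plus at least $1$ from the nonempty root bag; if no, then $|X_1| \geq b$ outright, and one then argues about the $b^2$ grandchild copies of $F_{b,\Delta}$ exactly as you do. Your remaining cases are essentially this second branch, so the repair is to drop the distinction on where $r$ and the $c_i$ live and use disjointness of whole subtrees from $X_1$ as the dichotomy.

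A smaller remark: in your case $r \notin X_1$ you claim every vertex lies in $X_1 \cup V(T^{(1)})$, hence that $T^{(1)}$ is the only child subtree. This is also false --- a path in $F_{b, \Delta+2}$ may leave $T^{(1)}$ by routing through the nonempty bag $X_1$ (the paper's analogous connectivity argument in Theorem~\ref{thm:main} works only because the root bag there is empty) --- but it is inessential: a connected subgraph disjoint from $X_1$ lies in \emph{some} single child subtree, which is all your subsequent steps need. Your base case and your ``side observation'' that $\cost_{\Delta'}(H)$ is at least the treedepth of $H$ are both correct and match what the paper uses.
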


\begin{proof}
    We prove by induction on $\Delta$. For $\Delta = 1$, note that the root bag must contain all non-leaf vertices of $F_{b, 3}$ and there are $b + 1$ such vertices.

    For $\Delta > 1$, we split the proof into two cases:
    \begin{itemize}
        \item If the root bag does not contain any vertex from at least one $F_{b, \Delta + 1}$ subtree, then by the induction hypothesis, the $\Delta - 1$ BET-depth baggy elimination tree for this subtree contributes at least $b + \Delta - 1$ to the cost. The root bag contains at least one vertex. So the total cost is $b + \Delta$.

        \item Otherwise, the root bag contains at least $b$ vertices. We claim that there are no vertices in the root bag from at least one of the $b^2$ subtrees isomorphic to $F_{b, \Delta}$. Otherwise, root bag itself will contain $b^2 \geq 2b \geq b + \Delta$ vertices. But the subtree $F_{b, \Delta}$ has cost at least $\Delta$ for any product depth. Therefore, the total cost is at least $b + \Delta$.
    \end{itemize}
\end{proof}

We can now prove a depth hierarchy theorem for monotone formulas.

\begin{theorem}\label{thm:depth}
    For any $\Delta \geq 1$, for any constant $k \geq 2$, there is a constant-degree (does not depend on $N$) polynomial family that has $O(s(N))$-size monotone formulas of product depth $\Delta$ but requires $\Omega(s(N)^k)$-size monotone formulas of product depth $\Delta - 1$, where $N$ is the number of variables in the formula.
\end{theorem}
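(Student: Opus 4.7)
The plan is to instantiate Theorem~\ref{thm:main} on the pattern $H = F_{b, \Delta+1}$ used in Section~\ref{sec:cirvsfor}, with $b$ a constant chosen as a function of $k$ and $\Delta$. For the upper bound, I would observe that $F_{b,\Delta+1}$, used as an ordinary single-vertex-per-bag elimination tree, is a valid baggy elimination tree whose deepest bags each contain a single pendant of $H$ and are therefore non-core leaves. Excluding these non-core leaves from the product-depth count, this baggy elimination tree has product depth $\Delta$ and cost $\Delta + 1$. By Theorem~\ref{thm:main}, $\coliso_{H,n}$ has a monotone formula of product depth $\Delta$ and size $O(n^{\Delta+1})$. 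Because $H$ has constant size, $N = \Theta(n^2)$, so this equals $O(N^{(\Delta+1)/2})$, and we set $s(N) := N^{(\Delta+1)/2}$.

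For the product-depth-$(\Delta{-}1)$ lower bound, I would invoke the preceding cost theorem with its parameter shifted: taking $\Delta' = \Delta - 1$ and pattern $F_{b, \Delta'+2} = F_{b, \Delta+1}$ gives $\cost_{\Delta-1}(F_{b,\Delta+1}) \geq b + \Delta - 1$ whenever $b > \Delta - 1$. Theorem~\ref{thm:main} then forces every product-depth-$(\Delta{-}1)$ monotone formula for $\coliso_{H,n}$ to have size $\Omega(n^{b+\Delta-1}) = \Omega(N^{(b+\Delta-1)/2})$. Choosing $b := (k-1)\Delta + k + 1$ simultaneously satisfies $b > \Delta - 1$ and $b + \Delta - 1 \geq k(\Delta+1)$, so the lower bound becomes $\Omega(s(N)^k)$, proving the theorem for $\Delta \geq 2$.

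The boundary case $\Delta = 1$ is not directly covered by the index shift (it would require a nonexistent product-depth-$0$ cost bound) but is trivial on its own: any $\coliso_{H,n}$ with $|E(H)| \geq 2$ has degree at least two and therefore cannot be computed by any monotone formula of product depth $0$, so the $\Omega(s(N)^k)$ bound holds vacuously. Taking $H = P_3$ supplies a product-depth-$1$ formula of size $O(N)$ to play the role of the upper bound. The only real bookkeeping obstacle is confirming that the two constraints on $b$, one from the hypothesis of the preceding cost theorem and one from demanding the $s(N)^k$ gap, are mutually satisfiable by a single constant depending only on $k$ and $\Delta$; this is straightforward since $(k-1)\Delta + k + 1 \geq \Delta$ for all $k \geq 2$ and $\Delta \geq 1$. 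Beyond this, the argument is a direct composition of Theorem~\ref{thm:main} with the cost bound from the previous theorem.
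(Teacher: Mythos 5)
Your proposal is correct and follows the same route as the paper: instantiate Theorem~1 on the family $F_{b,\Delta+1}$, get the upper bound from the tree used as its own (baggy) elimination tree with non-core leaves, and get the lower bound by composing Theorem~1 with the cost bound $\cost_{\Delta'}(F_{b,\Delta'+2})\geq b+\Delta'$ at $\Delta'=\Delta-1$. The differences are in the bookkeeping, and they matter: the paper's proof asserts an $O(n^{\Delta})$ upper bound and an $\Omega(n^{b+\Delta})$ lower bound, but since the treedepth of $F_{b,\Delta+1}$ is $\Delta+1$ the cost at product depth $\Delta$ is exactly $\Delta+1$ (non-core leaves are excluded from product depth, not from cost), and the cost theorem applied at product depth $\Delta-1$ yields only $b+\Delta-1$; with these corrected exponents the paper's choice $b=(k-1)\Delta+1$ gives a gap of $k\Delta/(\Delta+1)<k$, which falls short. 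Your larger choice $b=(k-1)\Delta+k+1$ makes $b+\Delta-1=k(\Delta+1)$ exactly and restores the claimed $\Omega(s(N)^k)$ separation against $s(N)=N^{(\Delta+1)/2}$. You also handle $\Delta=1$ separately, which the paper's induction-shifted argument does not cover (it would need the cost theorem at product depth $0$); your observation that no product-depth-zero monotone formula computes a degree-two polynomial disposes of that case. In short, same strategy, but your version is the one whose arithmetic actually closes.
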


\begin{proof}
    The family is $\coliso_{F_{b, \Delta + 1}}$ with $b = (k - 1)\Delta + 1$. The corresponding colored isomorphism polynomial has $O(n^\Delta)$ size monotone formulas of product depth $\Delta$ but need $\Omega(n^{b + \Delta})$ for product depth $\Delta - 1$.
\end{proof}

\begin{remark}
    Observe that any polynomial family where degree is bounded by constant $d$ has product depth one monotone formulas of size $O(n^d)$. Therefore, polynomial separations as above are the best one could hope for. However, it is possible that the above separations can be achieved using polynomials with lower (constant) degree.

    We note that the degree of $\coliso_{F_{b, \Delta+1}}$ equals $|E(F_{b, \Delta+1})|$, which grows with $b$ and $\Delta$. Since $b = (k-1)\Delta + 1$ for given constants $k$ and $\Delta$, this degree is a constant with respect to the input size $N$ (it does not grow with $N$), but it does depend on $k$ and $\Delta$. Concretely, the degree is $b \cdot \frac{b^{\Delta+1} - 1}{b - 1} \leq b^{\Delta+2}$, which is a constant for fixed $k$ and $\Delta$. Thus, for every fixed pair $(k, \Delta)$, Theorem~\ref{thm:depth} exhibits a polynomial family of constant degree (with respect to $N$) achieving the stated separation.
\end{remark}

\bibliography{references}

\appendix
\section{Examples: Formula Construction for $H = P_7$}
Let $H$ be the path graph $v_1 - v_2 - v_3 - v_4 - v_5 - v_6 - v_7$. The pendant vertices are $\{v_1\}$ and $\{v_7\}$. We use $i_j$ to denote the assignment $\phi(v_j)$. We now construct a formula from a baggy elimination tree (Refer Section~\ref{sec:main}).

\subsection{Example 1: Baggy elimination tree with BET-depth = 1}

Consider the trivial tree $T_1$ from Figure~\ref{fig:depth1} with BET-depth = 1 and $\cost = 7$. The tree $T_1$ has a single root node $R$, whose bag is $X_R = \{v_1, v_2, v_3, v_4, v_5, v_6, v_7\}$. There are no leaves apart from the root itself. 

\begin{figure}[htbp]
    \centering
    \begin{tikzpicture}
        \node[draw=orange!90!black, thick, fill=orange!10, ellipse, inner sep=8pt]
        at (0,0) {\{1,2,3,4,5,6,7\}};
    \end{tikzpicture}
    \caption{Baggy elimination tree for $P_7$ with BET-depth 1 and $\cost$ = 7}
    \label{fig:depth1}
\end{figure}

The corresponding formula is $\mathcal{F} = \mathcal{F}(R \mid \emptyset)$. Since the ancestor set $A(R)$ is empty, we apply the Base Case rule to the root node $R$. The root $R$ is a \emph{core leaf} because its bag is not a single pendant vertex. Therefore,
\[
    \mathcal{F} = \mathcal{F}(R \mid \emptyset) = \sum_{\phi_R: X_R \to [n]} \left( \text{EM}(\phi_R, X_R) \cdot \text{ALM}(\emptyset, \phi_R) \right).
\]
Here, $\text{ALM}(\emptyset, \phi_R)$ corresponds to the ancestor link monomial. Since there are no ancestors, it contributes a factor of 1. The term $\text{EM}(\phi_R, X_R)$ represents the edge monomial for the bag $X_R$, which contains all seven vertices of $P_7$. Hence, the monomial is the product of variables corresponding to all six edges in the path:
\[
    x_{i_1, i_2} \cdot x_{i_2, i_3} \cdot x_{i_3, i_4} \cdot x_{i_4, i_5} \cdot x_{i_5, i_6} \cdot x_{i_6, i_7}.
\]
Substituting this in gives the original $\Sigma_1 \to \Pi_1$ formula:
\[
    \mathcal{F} = \underbrace{\sum_{i_1, \dots, i_7 \in [n]}}_{\Sigma_1} 
    \underbrace{\left( x_{i_1, i_2} \cdot x_{i_2, i_3} \cdot x_{i_3, i_4} \cdot x_{i_4, i_5} \cdot x_{i_5, i_6} \cdot x_{i_6, i_7} \right)}_{\Pi_1}.
\]
Thus, the formula has product depth $1$. The total size of the formula is $O(n^\cost) = O(n^7)$, which also agrees with the cost $\cost = 7$.

\subsection{Example 2: Baggy elimination tree with BET-depth = 2}

We use the optimal tree $T_2$ from Figure~\ref{fig:path-tree} with BET-depth = 2 and $\cost = 4$. The root $R$ is $X_R = \{v_2, v_4, v_6\}$, and the leaves are $L_1 = \{v_1\}, L_3 = \{v_3\}, L_5 = \{v_5\}, L_7 = \{v_7\}$.

The formula is $\mathcal{F} = \mathcal{F}(R \mid \emptyset)$. Let $\phi_R$ be the assignment $\phi_R(v_j) = i_j$ for $j \in \{2,4,6\}$.
\[
    \mathcal{F}(R) = \sum_{\phi_R: X_R \to [n]} \left( \text{EM}(\phi_R, X_R) \cdot \text{ALM}(\emptyset, \phi_R) \cdot \prod_{j \in \{1,3,5,7\}} \mathcal{F}(L_j \mid \phi_R) \right)
\]
The term $\text{EM}(\phi_R, X_R)$ corresponds to the bag $X_R = \{v_2, v_4, v_6\}$, which is an independent set, so the monomial equals 1. Similarly, $\text{ALM}(\emptyset, \phi_R)$ has no ancestors, and its monomial also equals 1. This gives the $\Sigma_1 \to \Pi_1$ layers:
\[
    \mathcal{F} = \underbrace{ \sum_{i_2, i_4, i_6 \in [n]} }_{\Sigma_1} \underbrace{ \left( \mathcal{F}(L_1 \mid \phi_R) \cdot \mathcal{F}(L_3 \mid \phi_R) \cdot \mathcal{F}(L_5 \mid \phi_R) \cdot \mathcal{F}(L_7 \mid \phi_R) \right) }_{\Pi_1}
\]

We now apply the Base Case rule to each of the four leaf formulas. The ancestor set for all leaves is $A(L_j) = \{R\}$, so $\phi_{A(L_j)} = \phi_R$.

For $L_3 = \{v_3\}$ (Core Leaf),
\[
    \mathcal{F}(L_3 \mid \phi_R) = \sum_{\phi_3: X_3 \to [n]} \left( \text{EM}(\phi_3, X_3) \cdot \text{ALM}(\phi_R, \phi_3) \right)
\]
Here, $\text{EM}$ for $X_3 = \{v_3\}$ is 1. The $\text{ALM}$ term links $X_3 = \{v_3\}$ to $X_{A(L_3)} = X_R = \{v_2, v_4, v_6\}$. The edges are $(v_2, v_3)$ and $(v_3, v_4)$, giving the monomial $x_{\phi_R(v_2), \phi_3(v_3)} \cdot x_{\phi_3(v_3), \phi_R(v_4)} = x_{i_2, i_3} \cdot x_{i_3, i_4}$. Thus,
\[
    \mathcal{F}(L_3 \mid \phi_R) = \underbrace{ \sum_{i_3 \in [n]} }_{\Sigma_2} \underbrace{ (x_{i_2, i_3} \cdot x_{i_3, i_4}) }_{\Pi_2}.
\]

For $L_1 = \{v_1\}$ (Pendant Leaf),
\[
    \mathcal{F}(L_1 \mid \phi_R) = \sum_{\phi_1: X_1 \to [n]} \left( \text{EM}(\phi_1, X_1) \cdot \text{ALM}(\phi_R, \phi_1) \right)
\]
The $\text{ALM}$ term links $X_1 = \{v_1\}$ to $X_R = \{v_2, v_4, v_6\}$, and the only edge is $(v_1, v_2)$, giving monomial $x_{\phi_1(v_1), \phi_R(v_2)} = x_{i_1, i_2}$. Thus, $\mathcal{F}(L_1 \mid \phi_R) = \sum_{i_1 \in [n]} (x_{i_1, i_2})$. Similarly, $\mathcal{F}(L_5) = \sum_{i_5} x_{i_4, i_5}x_{i_5, i_6}$ and $\mathcal{F}(L_7) = \sum_{i_7} x_{i_6, i_7}$.

The product depth is 2. The size is $O(n^\cost) = O(n^4)$.

\subsection{Example 3: Baggy elimination tree with BET-depth = 3}

Consider the optimal tree $T_3$ from Figure~\ref{fig:depth3} with BET-depth 3 and $\cost = 3$. The root $R$ is $X_R = \{v_4\}$. The internal nodes are $t_1 = \{v_2\}$ (child of $R$) and $t_2 = \{v_6\}$ (child of $R$). The leaves are $L_1 = \{v_1\}, L_3 = \{v_3\}$ (children of $t_1$) and $L_5 = \{v_5\}, L_7 = \{v_7\}$ (children of $t_2$).

\begin{figure}[htbp]
    \centering
    \begin{tikzpicture}[
        every node/.style={circle, draw, thick, minimum size=18pt, inner sep=4pt},
        level 1/.style={sibling distance=40mm},
        level 2/.style={sibling distance=20mm},
        edge from parent/.style={draw, thick}
    ]
    \node[fill=orange!10, draw=orange!80!black]{\{4\}}
        child {node[fill=orange!10, draw=orange!80!black]{\{2\}}
            child {node[fill=purple!10, draw=purple!80!black]{\{1\}}}
            child {node[fill=green!10, draw=green!70!black]{\{3\}}}
        }
        child {node[fill=orange!10, draw=orange!80!black]{\{6\}}
            child {node[fill=green!10, draw=green!70!black]{\{5\}}}
            child {node[fill=purple!10, draw=purple!80!black]{\{7\}}}
        };
    \end{tikzpicture}
    \caption{Baggy elimination tree for $P_7$ with BET-depth 3 and $\cost$ = 3}
    \label{fig:depth3}
\end{figure}

The root formula is $\mathcal{F} = \mathcal{F}(R \mid \emptyset)$. Let $\phi_R(v_4) = i_4$.
\[
    \mathcal{F}(R) = \sum_{\phi_R: X_R \to [n]} \left( \text{EM}(\phi_R, X_R) \cdot 1 \cdot \mathcal{F}(t_1 \mid \phi_R) \cdot \mathcal{F}(t_2 \mid \phi_R) \right)
\]
Thus, $\mathcal{F} = \underbrace{ \sum_{i_4 \in [n]} }_{\Sigma_1} \underbrace{ \left( \mathcal{F}(t_1 \mid \phi_R) \cdot \mathcal{F}(t_2 \mid \phi_R) \right) }_{\Pi_1}$.

For the internal node formula, consider $\mathcal{F}(t_1 \mid \phi_R)$. The ancestor set is $A(t_1) = \{R\}$, so $\phi_{A(t_1)} = \phi_R$. Let $\phi_1(v_2) = i_2$. Then,
\[
    \mathcal{F}(t_1 \mid \phi_R) = \sum_{\phi_1: X_1 \to [n]} \left( \text{EM}(\phi_1, X_1) \cdot \text{ALM}(\phi_R, \phi_1) \cdot \mathcal{F}(L_1 \mid \phi_R \cup \phi_1) \cdot \mathcal{F}(L_3 \mid \phi_R \cup \phi_1) \right)
\]
For $X_1 = \{v_2\}$, $\text{EM} = 1$, and $\text{ALM}$ links $X_1 = \{v_2\}$ to $X_R = \{v_4\}$, but since there is no edge between them, the monomial is 1. Thus,
\[
    \mathcal{F}(t_1 \mid \phi_R) = \underbrace{ \sum_{i_2 \in [n]} }_{\Sigma_2} \underbrace{ \left( \mathcal{F}(L_1 \mid \phi_R \cup \phi_1) \cdot \mathcal{F}(L_3 \mid \phi_R \cup \phi_1) \right) }_{\Pi_2}.
\]

Now, consider the leaf formula for $L_3$. The ancestor set is $A(L_3) = \{R, t_1\}$, and the ancestor assignment is $\phi_{A(L_3)}$ which maps $v_4 \to i_4$ and $v_2 \to i_2$.
\[
    \mathcal{F}(L_3 \mid \phi_{A(L_3)}) = \sum_{\phi_3: X_3 \to [n]} \left( \text{EM}(\phi_3, X_3) \cdot \text{ALM}(\phi_{A(L_3)}, \phi_3) \right)
\]
Here, $\text{EM} = 1$ for $X_3 = \{v_3\}$. The $\text{ALM}$ term links $X_3 = \{v_3\}$ to $X_{A(L_3)} = \{v_2, v_4\}$. The edges are $(v_2, v_3)$ and $(v_3, v_4)$, giving monomial $x_{\phi_{A(L_3)}(v_2), \phi_3(v_3)} \cdot x_{\phi_3(v_3), \phi_{A(L_3)}(v_4)} = x_{i_2, i_3} \cdot x_{i_3, i_4}$. Thus,
\[
    \mathcal{F}(L_3 \mid \phi_{A(L_3)}) = \underbrace{ \sum_{i_3 \in [n]} }_{\Sigma_3} \underbrace{ (x_{i_2, i_3} \cdot x_{i_3, i_4}) }_{\Pi_3}.
\]

The size is $O(n^{\cost}) = O(n^3)$, matching the cost $\cost = 3$.

\end{document}